\documentclass[10pt]{article} 
\usepackage[accepted]{tmlr}


\usepackage{amsmath,amsfonts,bm}









\def\eqref#1{equation~\ref{#1}}









\def\1{\bm{1}}










\DeclareMathAlphabet{\mathsfit}{\encodingdefault}{\sfdefault}{m}{sl}
\SetMathAlphabet{\mathsfit}{bold}{\encodingdefault}{\sfdefault}{bx}{n}













\usepackage[sets]{cryptocode}
\usepackage{amsthm}
\usepackage{graphicx}
\usepackage{float}
\usepackage{booktabs}
\usepackage{hyperref}
\usepackage{url}

\newcommand{\lgc}[1]{}
\newcommand{\jtc}[1]{}
\newcommand{\scc}[1]{}
\newcommand{\rjc}[1]{}
\newcommand{\slc}[1]{}
\newcommand{\satya}[1]{}  
\newcommand{\Yehonathan}[1]{}

\newtheorem{theorem}{Theorem}
\newtheorem{lem}[theorem]{Lemma}
\newtheorem{definition}{Definition}
\newtheorem{remark}{Remark}
\newtheorem{corollary}{Corollary}[theorem]
\newcommand{\charlie}{\ensuremath{\mathcal{C}} }
\newcommand{\david}{\ensuremath{\mathcal{D}} }
\newcommand{\WCharlie}[1]{\bbw^{\charlie}_{#1}}
\newcommand{\WDavid}[1]{\bbw^{\david}_{#1}}
\newcommand{\bbw}{{\bf W}}
\newcommand{\bbu}{{\bf U}}
\newcommand{\bu}{{\bf u}}
\newcommand{\bv}{{\bf v}}
\newcommand{\ba}{{\bf a}}
\newcommand{\bc}{{\bf c}}
\newcommand{\br}{{\bf r}}
\newcommand{\bs}{{\bf s}}
\newcommand{\actv}[1]{\ba_{#1}}
\newcommand{\actvD}[1]{\ba^{\david}_{#1}}
\definecolor{darkgreen}{RGB}{0,100,0} 

\title{SLIP: Securing LLM's IP Using Weights Decomposition}


\author{\name Yehonathan Refael \email t-yrefael@microsoft.com \\
      \addr Microsoft\\
      \scriptsize{Author worked on this research during his internship at Microsoft}
      \AND
      \name Adam Hakim \email adamhakim@microsoft.com \\
      \addr Microsoft
      \AND
      \name Lev Greenberg \email levgreenberg@microsoft.com \\
      \addr Microsoft
      \AND
      \name Satya Lokam \email satya.lokam@microsoft.com \\
      \addr Microsoft
      \AND
      \name Tal Aviv \email talaviv@microsoft.com \\
      \addr Microsoft
      \AND
      \name Ben Fishman \email ben.fishman@microsoft.com \\
      \addr Microsoft
      \AND
      \name Shachar Seidman \email v-sseidman@microsoft.com \\
      \addr Independent Researcher
      \AND
      \name Racchit Jain \email t-racjain@microsoft.com \\
      \addr Microsoft
      \AND
      \name Jay Tenenbaum \email tenenbaumjay@microsoft.com \\
      \addr Microsoft
      \AND}



\begin{document}

\maketitle

\begin{abstract}
Large language models (LLMs) have recently seen widespread adoption, in both academia and industry. As these models grow, they become valuable intellectual property (IP), reflecting enormous investments by their owners. Moreover, the high cost of cloud-based deployment has driven interest towards deployment to edge devices, yet this risks exposing valuable parameters to theft and unauthorized use. Current methods to protect models' IP on the edge have limitations in terms of practicality, loss in accuracy, or suitability to requirements. In this paper, we introduce a novel hybrid inference algorithm, named SLIP, designed to protect edge-deployed models from theft. SLIP is the first hybrid protocol that is both practical for real-world applications and provably secure, while having zero accuracy degradation and minimal impact on latency. It involves partitioning the model between two computing resources, one secure but expensive, and another cost-effective but vulnerable. This is achieved through matrix decomposition, ensuring that the secure resource retains a maximally sensitive portion of the model's IP while performing a minimal amount of computations, and vice versa for the vulnerable resource. Importantly, the protocol includes security guarantees that prevent attackers from exploiting the partition to infer the secured information. Finally, we present experimental results that show the robustness and effectiveness of our method, positioning it as a compelling solution for protecting LLMs.
\end{abstract}

\jtc{TODO before TML submission: 
\begin{enumerate}
    \item Address comments from previous submissions.
    \item Who gets the final model inference? Charlie or David? Update in hybrid inference definition and protocol?
    \item Tie in to experiments, and make sure experiments match paper, not having a split but rather full verification. Specifically runtime is wrong.
    \item Problem: no Integrity for LLM (attention) in security paper!!!
    \item Distinguish between layer and Matrix-vector product for all sections, and use it correctly. The suggested solution solves specifically a matrix-vector product, and not a general layer (the non-matrix-vector-products are just done on Charlie).
    \item Make sure layer protocol is formal for attention, and it all makes sense.
    \item Make sure appendix theoretical runtime matches the new protocol where even in non-sensitive layers, the input is still naively k=0 masked.
    \item revert tmlr.sty to original
\end{enumerate}}

\section{Introduction}
\label{sec:intro} 

Large Language Models (LLMs) are increasingly being commercialized due to their superior performance and vast applications \cite{zhao2023survey}. Alongside this trend, many researchers have been tackling challenges related to their deployment -- balancing costs, latency, performance and security. Companies which develop LLMs invest substantial capital to create these models. Not only in extensive computing resources for training, up to hundreds of millions of dollars \cite{perrault2024artificial}, but also on high-end talent, datasets and labeling, proprietary training procedures and unique network architectures. Consequently, the parameters of a pre-trained foundational model can be considered \emph{highly valuable} intellectual property (IP) for its owner, who would therefore be highly motivated to secure it from theft. 
Cloud services offer various security solutions against model theft and extraction attacks, however they often come with a high price tag with recent growth in demand \cite{businessinsiderChatGPTCould}.
As a result, there is a growing financial incentive to offload as much computational workload as possible from secure computing resources, like the cloud, to more cost-effective, but \emph{less secure} edge devices, such as servers and laptops.
This motivates us to develop a security solution that protects the model owner's intellectual property, despite the security gap in edge devices. \satya{We are focusing so much on \emph{large} language models, but our results are much more suitable and practical only for \emph{small} models and we don't have justification of our claims in the intro in our results for any real LLM's. This is a concern.}

Recently, \emph{hybrid inference} has emerged as a framework that aims to address various challenges related to deployment of LLM's \cite{8763885}. In this framework, tokens are generated using two computing resources that differ in their properties. Most strategies focus on cost and performance \cite{ding2024hybrid, bang2024crayon, 10.1145/3037697.3037698}, or on privacy \cite{cryptoeprint:2023/1147,chaopeng2023privacy, rathee2020cryptflow2}, rather than on model security. Few have aimed to protect model IP; For example, in \cite{schlogl2020ennclave} they propose splitting the model to a feature extractor and task specific classifier, securing the final layers of deep neural networks (DNNs) in an enclave, but their approach degrades the model accuracy, supports feed-forward DNNs only, requires model retraining, and does not have provable guarantees against information leakage. In contrast, \cite{pmlr-v202-zhou23h} use RL for model splitting, but their obfuscated parameters are inefficiently spread across the model, and they lack security guarantees against model restoration through observing inferences. Finally, \cite{li2024translinkguard} employ weight permutation and an authorization module on a trusted execution environment (TEE), but their approach is vulnerable to hardware attacks, impractical for consumer devices due to the cost of TEEs, and not provably secure. \satya{What about Shadownet and more recent papers along similar lines? We need to mention them and compare our work to them.}

In this paper, we propose SLIP, a novel \emph{hybrid inference} approach to IP protection of the model that uses a cheap and low security computing resource in tandem with an expensive but secure resource. SLIP balances budget constraints, latency and security standards, but has no loss in model accuracy. We achieve this by offloading the majority of computations to the cost-effective but less secure computing resource, while protecting the most valuable yet computationally undemanding information on the secure but expensive resource. Moreover, we demonstrate that SLIP is applicable to various DNN architectures, including fully-connected multi-layer perceptrons (MLPs), convolutional neural networks (CNNs), and transformers.

\subsection{Our contributions}
We make the following contributions,
\begin{enumerate}
    \item We propose a \textbf{model decomposition} technique that strategically selects weight matrices across the model
    (as observed in \cite{meng2024pissa,wang2024svd,sharma2023truth}),
    and identifies the components within these matrices that embody the model IP, and safeguards them on the secure resource. Consequently, the remainder of the model contains most of the computations, but is essentially  useless for an attacker aiming to steal the model IP. 
    \satya{We should clearly say here that we have a companion paper that focuses on definitions and proofs of security and not go into this much detail on those. Instead, we must focus on the ML angle and experimental results. We must give more specific summary of the models, the experiments, latency overheads, performance metrics etc. }\jtc{I added a sentence in point 2 of the contributions, does this address your comment?}
    \item Since low and high security computing resources participate in each model inference, it incurs additional latency and introduces new potential attacks that could exploit the communication between them. Thus, our paper introduces a \textbf{secure inference protocol} to mitigate these security risks in hybrid inference. The protocol does not reveal the secure resource's intermediate outputs to the low-security resource, by \textbf{masking} these intermediate outputs with precomputed \textbf{uniformly distributed and input-independent} noise. Later, we efficiently cancel out this mask's effect without degradation in accuracy. We provide \textbf{theoretical proofs} for the correctness and security of the protocol, and in a companion paper~\cite{SLIP-security} provide even more formal definitions and proofs of the security of the protocol.
    \item We give \textbf{experimental evidence} for our method on LLMs. We show that our decomposition exposes a portion of the model with
    most of the computations but insignificant utility for an attacker aiming to steal the model IP. Moreover, we demonstrate that even when the attacker fine-tunes the exposed portion of the model in an attempt to restore its original performance, the resulting model is very weak compared to the original model, and the smaller the exposed portion, the weaker the resulting model.
\end{enumerate}

\subsection{Background and related works}
\textbf{Threat Model.} In our setting, we aim to protect from an attacker who aims to obtain the parameters, i.e., the IP \cite{10.1145/3595292} of our model. 
We assume the attacker knows the model architecture, has full knowledge of our protocol and any associated hyper-parameters \cite{petitcolas2023kerckhoffs}, and has indefinite physical access and full admin privileges in the low-security compute resource.\footnote{We treat this resource as if the attacker has possession of it, bypassed any existing security measures, has direct control over the exposed model parameters and can freely query the secure resource.}
Conversely, we assume that the secure resource is a trusted entity, inaccessible to the attacker except through the query API. 
Thus, our protocol focuses on protecting from an attacker who has access to the exposed model portion and attempts to restore the secret portion in the secure resource (or the complete model directly) through model restoration techniques \cite{chen2023lorashear,NEURIPS2023_44956951}.

\textbf{Model Extraction Attacks.} Model \emph{extraction} attacks are a specific type of security threat where attackers aim to create a duplicate or approximation of a machine learning model by using its public API (\cite{birch2023model,shamir2023polynomial,truong2021data,Kariyappa_2021_CVPR} see \cite{YAO2024100211,jagielski2020high} for review). Attackers manipulate the model's input-output pairs to learn about its structure and functionality, allowing them to create an approximation of the model. This can enable them to use the model for their own purposes, sidestep any usage restrictions or fees, or even uncover sensitive information about the original training data. Hence, this type of attacks could compromise compute resources at any security level, whether the inference system is hybrid or not, since it exploits the intended use of the service. Note that we focus on extraction attacks rather than \emph{stealing} \cite{10.1145/3595292}, since our setting is low-security resource edge devices.
 
\textbf{Protecting DNNs IP from Theft.} Several existing methods aim to secure DNNs. One obvious approach is \textit{Model Obfuscation}, the process of intentionally obscuring or hiding the sensitive information of a neural network while retaining its functionality. However, despite astonishing breakthroughs in recent years \cite{CACM-iO}, current solutions for cryptographically secure obfuscation either degrade model accuracy, or are extremely impractical even for simple functions \cite{CGM+-Obfus, cryptoeprint:2017/321}. Other approaches use knowledge distillation \cite{xu2018deepobfuscation}, architecture obfuscation \cite{zhou2022obfunas}, intentional model poisoning \cite{grailoo2022preventing}, increase model sensitivity to parameters perturbation \cite{szentannai2020preventing}, locate model tampering for user authorization \cite{10374926}, or various other means \cite{zhou2022model}. Additionally, passive protection techniques, such as \textit{watermarking} \cite{uchida2017embedding, adi2018turning, yan2023rethinking, yang2021robust,zhang2018protecting} (see \cite{li2021survey} for survey) or \textit{Attestation} \cite{chen2019deepattest}, help verify ownership and copyrights, but they cannot effectively \emph{prevent} unauthorized access and usage to demotivate attackers. Lastly, cryptographic techniques have been extensively applied in the area of Privacy Preserving Machine Learning (PPML). Such techniques include Homomorphic Encryption (e.g., \cite{acar2018survey}, \cite{onoufriou2021fully,lee2022privacy}),  Secure Multiparty Computation (e.g., \cite{knott2021crypten,Dalskov_2020, 10224651,10.1145/3133956.3134056, cryptoeprint:2023/1269}), and Differential Privacy (e.g., \cite{abadi2016deep}). However, in PPML, the emphasis is on user data privacy or training data rather than protecting model weights, and existing frameworks are not designed for this purpose, or do not account for the computational trade-offs between asymmetric parties with varying costs and security levels \cite{10.1145/3195970.3196023,juvekar2018gazelle, huang2022cheetah,rathee2020cryptflow2}.

\section{General Framework: Model Decomposition and Hybrid Inference Protocol}
\label{sec:framework_settings}

We now describe our framework for secure hybrid inference, as detailed in Figure~\ref{fig:framework}. Our goal is to offload most of the inference workload to a low security (but computationally efficient) resource, while ensuring that the intellectual property of the model remains secure. To achieve this, we introduce a model decomposition scheme and a hybrid inference protocol.
We consider two entities:
\begin{enumerate}
    \item A \emph{secure} resource controlled by $\mathcal{C}$harlie ('C' as in Cloud), which has limited compute power but is trusted to keep sensitive data confidential.

    \item A \emph{low security} resource controlled by $\mathcal{D}$avid ('D' as in Desktop), which is computationally powerful but fully exposed to the attacker.

\end{enumerate}

We split the model parameters between these parties, then define a protocol for inference on any input that involves communication between them. The decomposition and protocol must jointly satisfy the desiderata of usefulness, safety, efficiency, and security, which we formally define in Section~\ref{sec:safe-decomp}.

\begin{figure}[t]
    \centering
    \includegraphics[scale= 0.5]{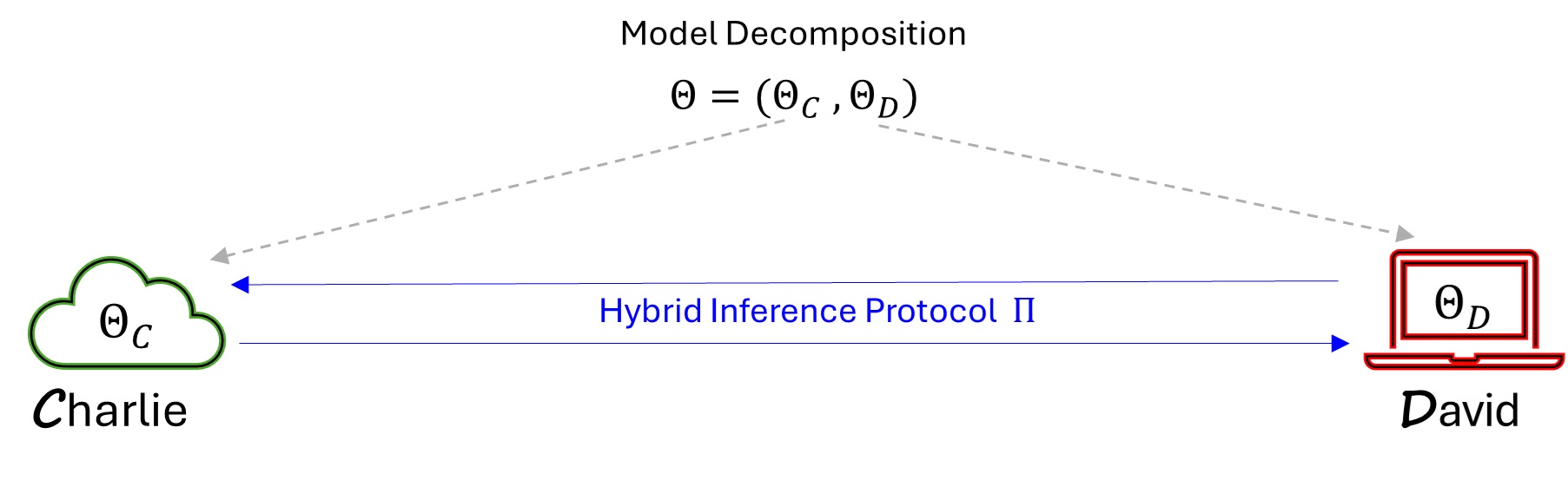}
    \caption{The proposed framework, consisting of the Model Decomposition and the Hybrid Inference Protocol}
    \label{fig:framework} 
\end{figure}

\begin{definition}[Model Decomposition]
Let $\phi_{\Theta}$ be a model with parameters $\Theta$. A \emph{model decomposition} is a pair
$(\phi_{\Theta_\charlie}, \phi_{\Theta_\david})$, where $\Theta=(\Theta_\charlie,\Theta_\david)$ denotes the parameter partition assigned to Charlie and David, respectively.
\end{definition}

\begin{definition}[Hybrid Inference Protocol] 
\label{def:hybrid-inf}
Given a model decomposition $\Theta = (\Theta_{\charlie}, \Theta_{\david})$ where Charlie and David only have access their respective portion of the model, a \emph{Hybrid Inference protocol} $\Pi := \Pi(\Theta_{\charlie}, \Theta_{\david})$ is an interactive protocol, that given an input $x$, jointly computes $\phi_{\Theta}(x)$.
\end{definition}

We now formalize what makes a good decomposition and protocol.

\subsection{What Makes a Good Decomposition and Protocol}
\label{sec:safe-decomp}

In order to protect a model's intellectual property (IP) while enabling efficient inference, we define four key properties that a decomposition and hybrid inference protocol should satisfy: \textbf{Usefulness}, \textbf{Safety}, \textbf{Efficiency}, and \textbf{Security}. These properties form the foundation of our framework.


We begin with intuitive explanations and follow with formal definitions.
\textbf{Usefulness} ensures the model IP isn't leaked through the David's portion of the model. \textbf{Security} means David doesn't learn anything about the full model beyond the model input/output pair during the hybrid inference protocol execution. \textbf{Safety} ensures that these input/output pairs are the major risk for David reconstructing the model (beyond David having his model portion).\footnote{Since a black-box adversary which sees sufficiently many input-output pairs of the model can recover the model $\Theta$ \cite{canales2024polynomial, carlini2024stealing}, and protecting from such attacks is a long term research challenge, we can only hope that our protocol does not give the attacker $\david$ much extra power beyond black-box access.} \textbf{Efficiency} means the protocol runtime for Charlie is much quicker than running the full model inference.

We encompass model IP through the notion of risk. For a model $\mathcal{M}$ and a loss function $\ell$, the model true risk is given by $\mathsf{risk}(\mathcal{M}) := \mathbb{E}_P \left[\ell(y, \mathcal{M}(x))\right],$ where $P(x,y)$ is the data distribution. We turn to the formal definitions.
\begin{definition}[Usefulness]
\label{def:useful}
Let $\phi_{\Theta}$ be the original model with parameters $\Theta$, and let $\Theta = (\Theta_{\charlie}, \Theta_{\david})$ be its decomposition. For an attacker algorithm $A$ (an adversarial David) that produces a model $A(\Theta_D)$ from $\Theta_D$, define its usefulness to be:
\begin{align}
\label{eq:usefulness}
\kappa_A := \frac {\mathsf{risk}(\phi_{\Theta})}{\mathsf{risk}(A(\Theta_D))}.
\end{align}
Note that lower $\kappa_A$ indicates the risk of the reproduced model is high, i.e., less useful. Also, for the decomposition in Section~\ref{sec:SVD}, the best $A(\cdot)$ just na\"ively returns $\Theta_D$. 
\end{definition}

\begin{definition}[Safety]
\label{def:safety} 
We say a decomposition $\Theta = (\Theta_\charlie, \Theta_\david)$ is \emph{safe} if,
for any Probabilistic Polynomial Time adversarial David $A$ that uses $\Theta_{\david}$ and black box access to $\phi_\Theta$ , there exists a Probabilistic Polynomial Time adversarial David $B$ using only black box access to $\phi_\Theta$, such that $B$'s reconstructed model has roughly the same risk as $A$'s reconstructed model.
\end{definition}



\begin{definition}[Security]
\label{defn:Security-decomp} 
A hybrid inference protocol $\Pi := \Pi(\Theta_{\charlie}, \Theta_{\david})$ is \emph{secure} if, on any input $x$, the \emph{view} of David during $\Pi$, beyond the input and the output, reveals no information about the weights of $\Theta_{\charlie}$.
\end{definition}

\begin{definition}[Efficiency]
\label{defn:eff-decomp} 
Let $T_{\Pi, \charlie}(\Theta_{\charlie})$ be the runtime of Charlie during the execution of a hybrid inference protocol $\Pi$, and let $T_{\charlie}(\Theta)$ be the runtime of Charlie for the full model inference. For $\epsilon>0$, we say $\Pi$ is \emph{$\epsilon$-efficient} if:
\begin{align}
\label{eq:eff-decomp}
T_{\Pi,\charlie}(\Theta_{\charlie}) & \; \leq  \; \epsilon \cdot T_{\charlie}(\Theta).
\end{align}
\end{definition}

In the following sections, we demonstrate how our proposed decomposition and protocol satisfy these properties both theoretically and empirically.

\subsection{Applying Our Framework to LLMs: A Roadmap}
\label{sec:offload-protect}
We now instantiate our framework to protect large language models (LLMs) during edge deployment. Our approach satisfies the desiderata introduced in Section~\ref{sec:safe-decomp} through two core components:
\begin{enumerate}
    \item \textbf{Model Decomposition}: We identify sensitive layers in transformer-based LLMs and apply singular value decomposition (SVD) to their weight matrices. The most informative singular components are kept on the secure side, while the remainder is offloaded.
    \item \textbf{Hybrid Inference Protocol}: To prevent the low security resource from recovering sensitive weights, we design a masking protocol that ensures secure interaction during inference.
\end{enumerate} 

In Section~\ref{sec:SVD}, we detail the decomposition strategy using SVD. In Section~\ref{sec:protocol}, we introduce and analyze the secure protocol. Section~\ref{sec:Empirical Experiments} evaluates the effectiveness of our full system on several LLMs, validating all four desiderata in practice.


\section{Model decomposition via SVD}
\label{sec:SVD}

We now describe our method for securely decomposing the model parameters, satisfying the desiderata of usefulness and efficiency (Section~\ref{sec:safe-decomp}). Our strategy leverages a key empirical observation: in neural network layers which perform matrix-vector multiplications  (or layers which contain such operations, e.g., Linear Layers, Attention), the singular values of these matrices quickly decrease, as illustrated in Figure~\ref{fig:singular_values}. This suggests that most of the information of the matrix (corresponding to the IP within the layer) is concentrated in a few top singular components, so Charlie can hold these singular component, and efficiently compute the linear portion only using his components. This makes singular value decomposition (SVD) a natural tool for balancing efficiency with usefulness. 

\begin{figure}[t]
    \centering
    \includegraphics[width=1\linewidth]{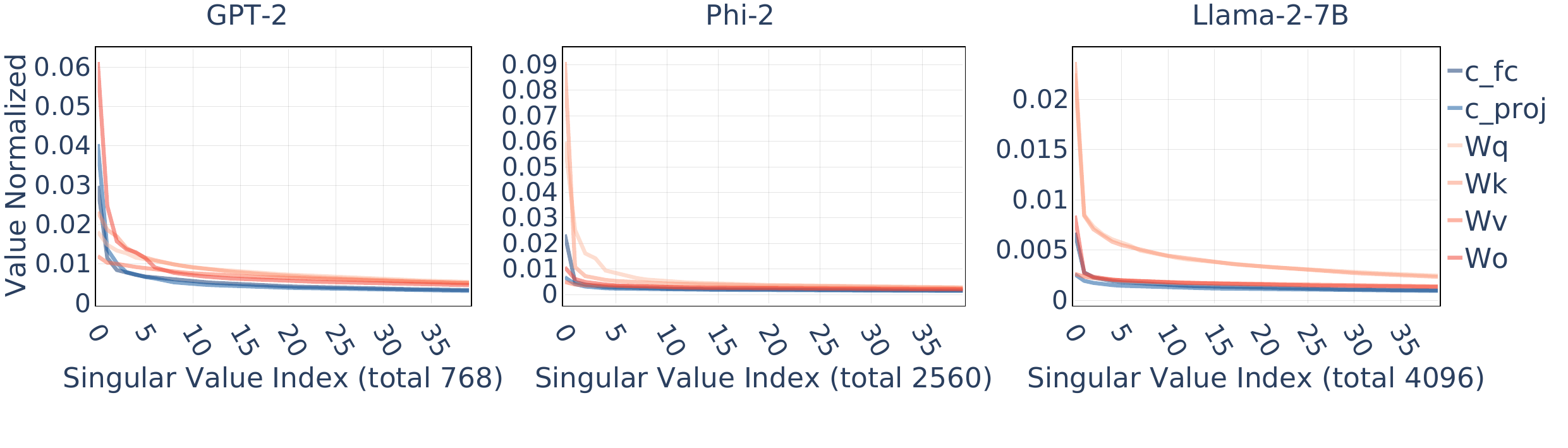}
    \caption{Singular values of the weight matrices of each model, quickly decreasing from largest to smallest.}
    \label{fig:singular_values} 
\end{figure}

\subsection{Layer-wise Decomposition}
\label{sec:layerwise-decomp}
Many layers in transformer-based LLMs and also other modern architectures, contain one or many matrix-vector multiplications, e.g., fully connected layers and attention layers (Appendix~\ref{app:attention}), and convolutional layers (Appendix~\ref{cnn}).
We apply SVD to decompose these matrices into two parts:
\begin{enumerate}
    \item a small number $k$ of top singular components, retained in the secure side (Charlie), and
    \item the remaining components, offloaded to the insecure device (David).
\end{enumerate}
For simplicity of presentation, to assume each layer has a single weight matrix, for attention layers we consider each matrix-vector operation within it as a separate linear layer for the decomposition above. 
Also for convolutions, we transform the convolution to its equivalent matrix-vector product representation. 

Formally, let  $\bbw\in\mathbb{R}^{m\times n}$ be a weight matrix. Its Singular Value Decomposition (SVD) is given by 
$$\bbw = \sum_{j=1}^r \sigma_{j} \bu_j \bv_{j}^{\top},$$
where $\{\sigma_j \in \mathbb{R}_{\geq 0}\}_{j \in [n]}$ are the singular values in decreasing order, $\{\bu_j \in \mathbb{R}^m\}_{j \in [m]}$ and $\{\bv_j \in \mathbb{R}^n\}_{j \in [n]}$ are the left and right singular vectors, and $r\leq \min\{m, n\}$ is the rank of $\bbw$. 
We define the sensitive portion $\WCharlie{}$ retained by Charlie to be the top $k$ singular components of $\bbw$, $$\WCharlie{ } = \sum_{j=1}^{k} \sigma_{j} \bu_{j} \bv_{j}^{\top},$$ and the remaining portion $\WDavid{ }$ offloaded to $\david$ to be the remaining components $$\WDavid{ } =\sum_{j=k+1}^{n} \sigma_{j} \bu_{j} \bv_{j}^{\top}.$$

This decomposition achieves two goals. It is useful, since the exposed portion $\WDavid{ }$ contains little signal, and it is efficient for David. Specifically, for a weight matrix $\bbw$, computing $\bbw x$ requires $O(mn)$ floating point operations, whereas if we store $\WCharlie{ }$ in its decomposed form, computing Charlie's part $\WCharlie{ }x$ requires only $O(k(n+m))$ operations.

We choose $k \ll r$, typically $k<10$, to minimize Charlie's compute, but $k$ must be large enough to ensure usefulness, specifically $k>1$ to prevent a trivial reconstruction attack described in Appendix~\ref{sec::MinimalNumberEigenvectors}.

In Section~\ref{sec:Empirical Experiments}, we empirically demonstrate the security and efficiency of this method, and show that it is far superior than other decomposition methods, such as offloading large norm columns or random singular component.

\subsection{Full-Model Decomposition}
To decompose a full model $\Theta$, we strategically select a set of “sensitive layers” among the layers defined by weight matrices $\bbw$, and decompose them as in Section~\ref{sec:layerwise-decomp} (see Appendix~\ref{Model Decomposition Strategy} for the selection process). The weight matrix $\WCharlie{}$ and the remaining computation within the layer (e.g., the non-linearity portion of a fully connected layer) are assigned to $\Theta_\charlie$, and $\WDavid{}$ is assigned to $\Theta_\david$. Non-sensitive layers are assigned entirely to $\Theta_\david$.

Formally, the overall decomposition becomes:
\begin{align*}
  &\Theta_\charlie=\{\WCharlie{i}\}_{i\in sensitive}\\
  &\Theta_\david=\{\WDavid{i}\}_{i\in sensitive}\cup \{{\bbw}^j\}_{j\notin sensitive}.  
\end{align*}

In Section~\ref{sec:Empirical Experiments}, we show that only decomposing a few layers, typically both at the beginning and end of the model using the SVD decomposition from Section~\ref{sec:layerwise-decomp} yields a useful and safe decomposition.

\section{Secure Hybrid Inference Protocol}
\label{sec:protocol}
In Section~\ref{sec:SVD}, we saw how leverage SVD to decompose a model $\Theta$ into $\Theta = (\Theta_C, \Theta_D)$ such that Charlie's part $\Theta_C$ preserves most of the model's IP and David's part $\Theta_D$ contributes minimally to a malicious David which aims to extractact the original model $\Theta$. In this section, we present a \emph{hybrid inference protocol} where Charlie and David collaborate to compute the inference on this decomposed model by exchanging information on computations on their respective parts. We assume that Charlie owns the model input $x$ and is also responsible for producing the final output of the model.

\subsection{Why Na\"ive Hybrid Inference Leaks IP}

Consider a sensitive layer $i$ with weight matrix $\bbw_i$ decomposed as $\bbw_i = \WCharlie{i} + \WDavid{i}$. A na\"ive inference protocol to compute $\bbw_i \ba_{i-1}$ might proceed as follows. Charlie sends the input activation $\ba_{i-1}$ to David, receives $\WDavid{i}\ba_{i-1}$ from David, locally computes $\WCharlie{i} \ba_{i-1}$, and applies the nonlinearity $\sigma$ to obtain $\ba_i = \sigma(\WCharlie{i} \ba_{i-1} + \WDavid{i} \ba_{i-1})$. If David also observes $\ba_{i}$ (e.g., since the next layer $i+1$ is also sensitive) and if $\sigma$ is invertible (e.g., a sigmoid or a ReLU over non-negative inputs), he can reconstruct the linear equation $b_i = \WCharlie{i} \ba_{i-1} + \WDavid{i}\ba_{i-1}$, in which he knows all variables except $\WCharlie{i}$. By observing sufficiently many pairs $(\ba_i,\ba_{i-1})$, David can solve a system of linear equations to discover $\WCharlie{i}$, thereby leaking the sensitive information (IP) $\WCharlie{i}$ of the model $\Theta$ to David.  

 This attack shows that, without any additional security measures, the na\"ive inference protocol is inherently insecure. We next describe a security solution that protects against such leakage. 

\subsection{IP Protection via Masking}
\label{sec:ip_protection_via_masking}
Intuitively, to prevent information leakage about $\Theta_{\charlie}$ in the inference protocol, e.g., as in the attack above, we \emph{mask} Charlie's messages (internal model activations) so they appear random to David. The tricky part is to recover David's matrix-activation product $\WDavid{i}\ba_{i-1}$ from the output on David's part of the model on the masked input. Fortunately, since David's computation in sensitive layers is \emph{linear} (either in Linear layers or Attention layers which are a composition of linear operations), Charlie can later cancel the masks via \emph{precomputed mask cancellations} to recover David's matrix-activation product locally. Below we describe the main technical ideas of the protocol and its security proof. For more formal details of the protocol and proofs, see the companion paper~\cite{SLIP-security}. 

To realize this intuition in a concrete protocol, we assume that the model is quantized to integers, and our protocol uses modular arithmetic over \emph{integers mod $L$} for a sufficiently large $L$.
For each sensitive layer $i$, before inference, Charlie generates a uniformly random mask $\br_{i-1}$ of suitable length (as the layer input dimension) over integers $\bmod L$, and precomputes the corresponding cancellation mask $c_i:=\WDavid{i}\br_{i-1}$ and stores it.
During inference, for a layer input $\ba_{i-1}$, Charlie sends the masked input $\widetilde{\ba}_{i-1} = \ba_{i-1} + \br_{i-1} \pmod{L}$ to David, who computes $\widetilde{\ba_i^{\david}} := \WDavid{i} \cdot \widetilde{\ba}_{i-1} \pmod L$ and sends it to Charlie. In parallel, Charlie computes $\bbw_i^{\charlie} \ba_{i-1}$. Upon receiving $\widetilde{\ba_i^{\david}}$, Charlie then recovers the output $\bbw_i \ba_{i-1}$, using the precomputed cancellation mask $c_i$ via
\begin{equation*}
    \bbw_i \ba_{i-1}=\bbw_i^{\charlie} \ba_{i-1} + \widetilde{\ba_i^{\david}} - c_i, 
\end{equation*}
which holds since $\bbw_i = \bbw_i^{\charlie} + \bbw_i^{\david}$ and since $\widetilde{\ba_i^{\david}} - c_i = \bbw_i^{\david} {a}_{i-1} + \bbw_i^{\david} \br_{i-1} - c_i = \bbw_i^{\david} {a}_{i-1}$ by the definition of $c_i$. 
Finally, Charlie derives the next layer input by computing $\ba_i = \sigma(\bbw_i \ba_{i-1})$ from the recovered $\bbw_i \ba_{i-1}$.

To apply this intuition for Transformer models, for which the basic layer beyond Matrix-vector multiplication is attention, we give a formal and slightly more involved protocol in Appendix~\ref{app:attention}.

We now analyze the different properties of this protocol.

First, observe that the masking gives Charlie uniformly random values, independent of the activations.
\begin{theorem}[Perfectly secure masking]
    \label{thm:perfectly_secure_masking}
    Let a discrete random variable $s\in\mathbb{Z}^d$ and random noise $n\sim\bbu[0,L-1]^d$, and denote the masked variable by $s_n = \text{mod}(s + n, L).$ Then $s_n \sim \bbu[0,L-1]^d,$ and $s_n$ and $s$ are independent.
\end{theorem}


Next, we observe that the computation is correct (assuming David follows the protocol), 
\begin{lem}
\label{lem::remove_noise}
Assuming $L \geq \|\bbw_i^{\david} \ba_{i-1}\|_{\infty}$, Charlie's computation correctly computes $\bbw_i \ba_{i-1}$, over the integers modulo $L$.  
\end{lem}

Finally, note that this protocol is efficient since Charlie performs a constant number of vector addition, subtraction and sigmoid computations, but they are dominated by the two heavy operations: \textbf{1.} $\WDavid{i} \br_{i-1}$, which is independent of the input and is therefore pre-computed before the protocol, and \textbf{2.} $\WCharlie{i}\ba_{i-1}$ which is efficient since $\WCharlie{i}$ contains only the top $k$ SVD components (see Section~\ref{sec:layerwise-decomp}).

\subsection{Full Model Inference Protocol \texorpdfstring{$\Pi$}{Pi}}\label{subsec:full-model-inference-protocol}
\jtc{Add visualization for full model}
To extend the protocols above from a single layer to a full model, the full model protocol $\Pi$ proceeds through the model layer by layer.\footnote{We observe that our solution can be extended to trade-off security with efficiency by splitting not the whole model, but rather parts of it which, informally, contain most of the model knowledge.}

For each sensitive layer, whose weights must be protected, we run the masking scheme from Section~\ref{sec:ip_protection_via_masking}.

For each non-sensitive layer which is fully exposed to David, we still need to mask (since otherwise David can leverage internal activations to recover the model easily using reconstruction attacks between activations), which is equivalent to running the masking and scheme for a layer decomposition in which Charlie retains $k=0$ components. 

For both sensitive and non-sensitive layers, all operations which are not Matrix-vector multiplications (e.g., nonlinearities, softmax, etc.) are computed completely by Charlie, so that Charlie can mask the next layer input. Note that the majority of computation is within Matrix-vector multiplications (or in attention layers, for which we give a modified efficient protocol per layer in Appendix~\ref{app:attention}), so our scheme is  efficient.


Our main result is that this full model protocol is secure. Specifically, we apply Theorem~\ref{thm:perfectly_secure_masking} for each layer separately to conclude that the masked messages from Charlie to David are independently and uniformly random vectors over integers $\bmod L$ of the corresponding length (even regardless of David's messages in the protocol). Hence Charlie's view of the protocol is just a set of independently and uniformly random vectors over integers (together with the model output which is shared), and the protocol is secure.\footnote{In our companion paper~\cite{SLIP-security}, we give a complete and formal proof for security.}
By the definition of safety, it follows that if the decomposition $\Theta = (\Theta_{\charlie}, \Theta_{\david})$ is safe, then the protocol does not leak the model IP to David. 


\subsection{Ensuring Integrity by Probabilistic Verification}
\label{sec:integrity_by_prob_verification}
The above protocol ensures security even against a \emph{malicious} David who could try to sabotage the inference protocol by sending incorrect or maliciously crafted messages to Charlie. However, the correctness of the protocol, i.e., that Charlie computes the model inference correctly, relies on the fact that David follows the rules of the protocol faithfully and returns the correct $\WDavid{i}\ba_{i-1}$ for each sensitive layer $i$. We present a mechanism for Charlie to detect such potentially malicious behavior by David and abort the inference protocol, such that any protocol which runs to completion, returns an identical output to the original model $\Theta$. We call this \emph{inference integrity.} Below, we describe how one could add inference integrity to the model IP protection protocol from the previous subsection, thus obtaining a \emph{fully secure} inference protocol even against a malicious David. Detailed technical definitions and security proofs for integrity appear in the companion paper \cite{SLIP-security}; here, we focus on sketching essential ideas, and present our formal protocols and experiments focusing on IP protection \textbf{without} integrity. 

The key idea\footnote{This is a well-known and easy to prove fact. But it is often attributed Freivalds who proved a more general claim about probabilistic verification of a matrix product.} relies on the fact that if a vector $\bv$ (over a finite field) is nonzero, then for a random vector $\br$, the inner product $\br^T \bv$ is nonzero with high probability. Charlie leverages this to validate that David's response $\widetilde{\ba_i^{\david}}$ satisfies $\widetilde{\ba_i^{\david}} =  \bbw_i^{\david} \cdot \widetilde{\ba}_{i-1} \pmod L$.  Specifically, Charlie checks whether $\br^T\widetilde{\ba_i^{\david}} \stackrel{?}{=} \br^T \bbw_i^{\david} \widetilde{\ba}_{i-1} \pmod L$. If this check fails, Charlie declares David as dishonest and aborts the inference protocol.  Note that since David's computation in our decomposition is linear (which is an associative operation), Charlie can \emph{pre-compute} the vector $\bs^T := \br^T \bbw_i^{\david}$ and store $\bs^T$. Hence, during inference, Charlie's check involves only checking whether $\br^T\widetilde{\ba_i^{\david}} \stackrel{?}{=} \bs^T\widetilde{\ba}_{i-1}$ (two inner products), which has negligible overhead compared to the matrix-vector product performed by David.
This pre-computation gives is a substantial savings during inference and can be hundreds to thousands times smaller for a typical linear layer compared to the cost of inference, rendering the protocol efficient even with integrity. 

The full model protocol with integrity simply adds integrity checks for each layer (i.e., for all Matrix-vector products in the inference).

\subsection{Putting things together: Secure Hybrid Inference on a Decomposition}
Now that we have a general method to decompose a general model (Section~\ref{sec:SVD}) and a general protocol to perform secure inference (Section~\ref{sec:protocol}), we select them to ensure the desiderata from Section~\ref{sec:safe-decomp}. Specifically, in Section~\ref{sec:Empirical Experiments} we find the layers which are most essential for the inference, and experimentally find the number of sensitive layers and the number of $k$ of retained singular components, to ensure the decomposition is both Useful and Safe. Then, efficiency follows thanks the low number of sensitive layers and $k$, together with our hybrid inference protocol which is communication and computation-efficient for Charlie. Finally, protocol security follows from Section~\ref{subsec:full-model-inference-protocol}. Combining these 4 desiderata, the solution is efficient for Charlie (efficiency), and David cannot reconstruct the model from his portion of the model (usefulness), nor from the interaction from the model which only gives him model input/output pairs (security), which don't help him beyond regular model black-box attacks (safety).






\section{Experiments}
\label{sec:Empirical Experiments}
In this section, we evaluate the SLIP framework and explore different decomposition strategies using three LLMs: GPT-2 Small \cite{Radford2019LanguageMA}, Phi-2 \cite{microsoft2023phi2}, and LLaMA2-7B \cite{touvron2023llama}. We run experiments on an NVIDIA A100 GPU and use the WikiText benchmark \cite{merity2016pointer} for perplexity evaluation. Perplexity is computed using EleutherAI’s evaluation suite \cite{eval-harness}. Our experiments aim to assess the usefulness, safety, and efficiency of the decomposition and protocol.

Our first 4 experiments show how to find a safe and useful decomposition. Specifically, we first find which decoder blocks in the model are best to offload and select as sensitive (contain much IP), then find which layers in these blocks are best to offload, then how to decompose each such layer (SVD over the alternatives), and finally show the trade-off between offload $\%$ and security (perplexity). Our last experiment shows that our decomposition + protocol is efficient.

\textbf{Which Decoder Blocks are Sensitive.}
Recall that for efficiency, we want to identify the smallest set possible of sensitive layers which contain the model IP. For LLMs, the high-level layers are decoder blocks, each which contain internal layers (e.g., $\bbw_q$ for attention query, projection etc.). To identify sensitive decoder blocks, we consider sequences of consecutive decoder blocks with various sequences positions (indexes) and length (1,3,5 or 7), omit them from the model to simulate the portion moving completely to Charlie, and measure the usefulness via perplexity. Consistent with the findings of \cite{gromov2024unreasonable}, Figure~\ref{fig:Block Bypass} demonstrates that sensitive layers in the first decoder blocks damages David's model the most (i.e., most useful), in last ones slightly less, and the least in the middle.
Hence, we experiment with a class of decompositions for models with $\ell$, in which the first $a$ layers ($1$ to $a$), and the last $a$ layers ($\ell-a+1$ to $\ell$) are taken as sensitive, all with the same value $k$ (for simplicity). 
\begin{figure}[t] \centering\includegraphics[width=1\linewidth]{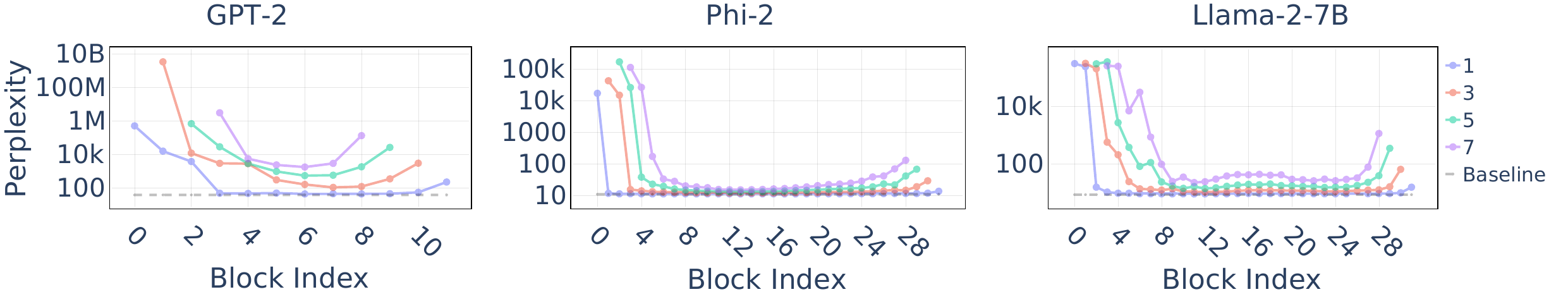}
    \caption{Performance degradation of the LLMs when bypassing windows of 1, 3, 5, or 7 consecutive decoder blocks, compared to the baseline model. The x-axis represents the center block of each window, while the y-axis shows the perplexity (lower perplexity is better). }
    \label{fig:Block Bypass}    
\end{figure}

\textbf{Which Decoder Block Internal Layers are Sensitive.}
Now that we know which decoder blocks should be taken as sensitive, we must decompose them, i.e., identify their internal sensitive layers. There are several different internal layers which contain Matrix-vector multiplication, specifically MLP layers ($\bbw_{c\_{fc}}$, $\bbw_{c\_{proj}}$) and attention layers ($\bbw_q$,$\bbw_k$,$\bbw_v$,$\bbw_o$). We empirically show that the decomposition usefulness greatly varies with the layer type above that we decompose, and the number of top singular components of the weight matrix retained by Charlie (usually $k=10$ is enough).
In Figure~\ref{fig:Decomposition Usefulness}, we plot the usefulness measured by the resulting perplexity of David's model as a function of both, where we average the results across all the blocks of the model of each layer type to see which is best for each model on average. We find that the taking the internal layer type $\bbw_{c\_{fc}}$ as sensitive is useful for GPT-2 and Phi-2, and the layer $\bbw_k$ is for Llama-2-7B.
\begin{figure}[t]
    \centering\includegraphics[width=1\linewidth]{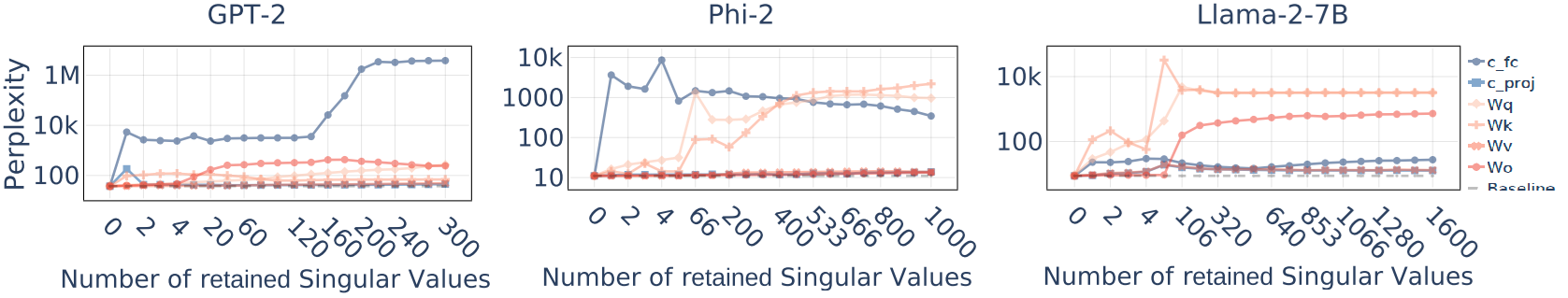}
    \vspace{.5em}
    \caption{Perplexity scores for all three models, after removing various numbers of singular components from different layer types in a single decoder block. Results are averaged across all possible blocks.}
    \label{fig:Decomposition Usefulness}        
\end{figure}

We now study Safety, that is, to what extent can an adversary David with his portion of the model and a set of model input/output pairs (taken from a public dataset), recover model performance (through fine-tuning \cite{han2024parameter} resources).
We use the public dataset Alpaca \cite{alpaca} and LoRA \cite{hu2021lora}, to fine-tune Phi-2 (or Llama-2-7B) for 10 epochs (to not overfit on the relatively small dataset), across five decomposition configurations and a baseline with randomized model parameters.

\textbf{How to Decompose a Sensitive Layer Safely}.
We compare our method Largest Singular Components (LSC) to four alternatives:
\begin{enumerate}
    \item \textbf{LCS} - retains the top $200$ singular components,
    \item \textbf{RCS} - randomly retains 200 singular components,
    \item \textbf{Norm CS} - retains 200 columns with largest L2 norm,
    \item \textbf{Random CS} - retains 200 random columns, and
    \item \textbf{FMS} - Offloads complete weight matrices to reach $97.5\%$ parameter offload to David.
\end{enumerate}

All methods are evaluated on decompositions of the same model, offloading 97.5\% of parameters and focusing on the 5 initial and 5 final decoder blocks, and after attempting to restore David's model using the Safety experiment mentioned above.
Table~\ref{table:alternatives} shows that the LSC method consistently results in the highest perplexity and lowest accuracy after fine-tuning David's model to the PPL wikitext and testing performance on 3 text classification tasks (ARC~\cite{arc-prize-2025}, Hellaswag~\cite{zellers2019hellaswag} and MMLU~\cite{peiyuan_liu_2023}), confirming it as the safest decomposition scheme.



\begin{table}[t] \centering\includegraphics[width=0.5\linewidth]{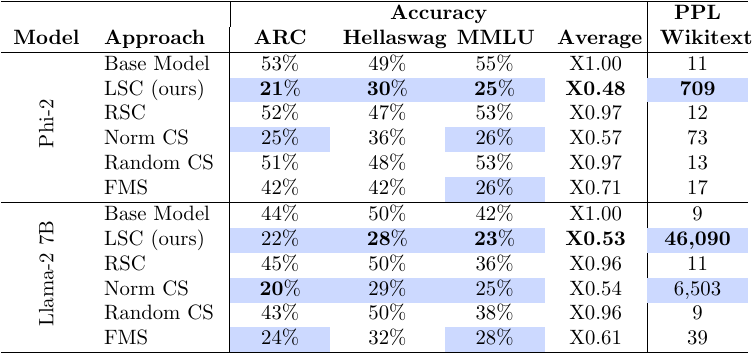}
    \caption{Benchmark results for each model and decomposition method on perplexity (PPL Wikitext) and classification accuracy (ARC, Hellaswag and MMLU) after restoration through fine-tuning. Blue indicates safe results while the bold is safest. LSC most often outperforms others.}
    \label{table:alternatives}    
\end{table}

\textbf{Efficiency vs Safety Trade-off}.
To experiment with various levels of efficiency, we consider decompositions created by varying the number $a$ of sensitive decoder blocks from the beginning and end of the model, and the number $k$ of top singular values retained by Charlie from each such layer (block). Each decomposition gives David a different fraction of the total model parameters, and we use this ratio as a proxy the inference time ratio. This proxy is accurate since Charlie's computation is dominated by matrix-vector multiplication in sensitive layers, and we store Charlie's portion in its decomposed form (See Section~\ref{sec:layerwise-decomp}).

Figure~\ref{fig:Model Recovery} shows the perplexity of a decomposed Phi-2 both after decomposition and after fine-tuning using the Safety study mentioned above. For all decompositions, Charlie's portion of the model has perplexity similar to a random weights model, and after retraining, the more computation offloaded to David, the lower the recovered perplexity. Specifically, configurations that offload 92\% to 97.5\% to David remain robust against recovery, i.e., safe.


\begin{figure}[t]
    \centering
    \begin{minipage}{0.1\linewidth} 
        \small 
        \vspace{10pt} 
        \setlength{\tabcolsep}{1pt} 
        \renewcommand{\arraystretch}{0.7} 
        \begin{tabular}{@{}lccc@{}}
            \toprule
            \textbf{Idx} & \textbf{Offload\%} & \textbf{Top Retained SVs} & \textbf{Sensitive Blocks} \\
            \midrule
            Rand & 0 & 0 & All \\
            1 & 85 & 380 & All \\
            2 & 92 & 200 & All \\
            3 & 95 & 200 & ±10 \\
            4 & 97.5 & 200 & ±5 \\
            5 & 99.4 & 50 & ±5 \\
            6 & 99.9 & 10 & ±5 \\
            \bottomrule
        \end{tabular}
    \end{minipage}
    \hspace{0.28\linewidth} 
    \begin{minipage}{0.6\linewidth} 
        \centering
        \vspace{0pt} 
        \includegraphics[width=0.67\linewidth]{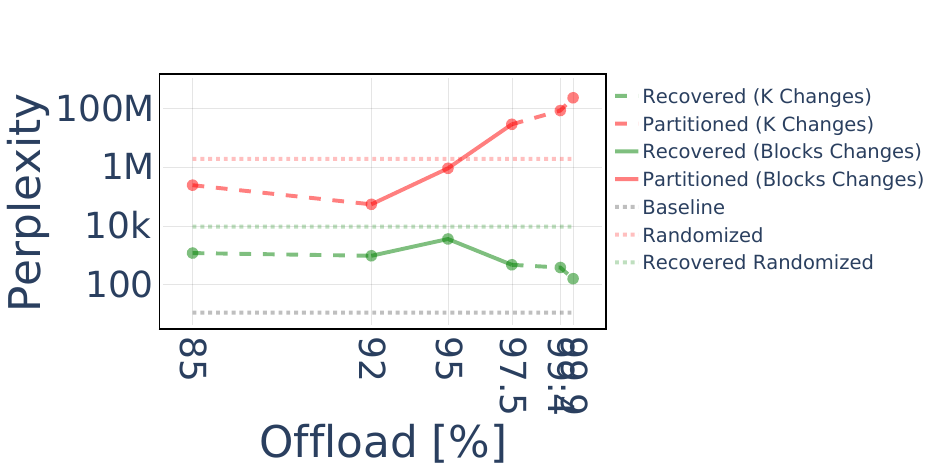}
    \end{minipage}
    \caption{Impact of fine-tuning on various decomposition configurations. Left: list of experimented configurations, with \% of offloaded computation to David, number of top singular values retained by Charlie from each layer, and the number of sensitive layers (blocks) in the beginning and end of the model. Right: the perplexity after decomposition (\textcolor{red}{red}) and after retraining (\textcolor{darkgreen}{green}). The experiment starting from randomized weights is dotted, and the baseline perplexity is in gray.}
    \label{fig:Model Recovery}
\end{figure}

\textbf{Efficiency Analysis.}
We simulate the latency of our approach decomposed to an edge device (Charlie) and a cloud server (David), considering the computational capacities of these hardware, network bandwith and delay.
Our simplified analysis, based on a version of Llama2 \cite{touvron2023llama} with only feed-forward layers, decomposed in a third of its layers, shows edge computation dominates latency (150.34 ms), followed by data transfer (71.31 ms), with cloud computation being minimal (0.66 ms). \jtc{While this does not include integrity, the efficiency is similar to that of the protocol for security.} Our approach is in fact $\varepsilon$-efficient for $\varepsilon=1.5\%$ (\jtc{without integrity}, see Appendix~\ref{sec_latency}), and one can theoretically prove that for a simplified feed-forward model with $\ell$ identical linear layers of type $\bbw \in \mathbb{R}^{d \times d}$, and $a$ sensitive layers in the beginning and the end with $k$ retained eigenvectors, the protocol is $O\left((2ak+\ell)/(\ell d)\right)$-efficient including both security and integrity.


\section{Discussion}
\label{sec:Discussion}
In this paper we present SLIP, a novel hybrid inference technique designed to safeguard DNNs, particularly LLMs, against theft when offloaded to low-security resources like edge devices, leaving minimal computations on the secure resource, such as the cloud or a trusted execution environment. To our knowledge, SLIP is the first technique that is both practical for real-world use and provably secure, without compromising model accuracy (the hybrid inference protocol retains the original model outputs) or introducing significant latency (primarily influenced by network delays which can be mitigated in scenarios such as TEEs). It leverages the rapid decay of singular components in LLM weight matrices to obtain a useful, efficient and safe model decomposition, storing only the most sensitive singular components securely, while the remaining model is useless and robust against fine-tuning restoration, more than alternative partitioning approaches. 
To protect the secret portion of the model from the low-security resource, we give a provably secure and efficient inference protocol. These properties make SLIP highly promising for industry applications, enabling safe cost-effective model deployment on edge devices, and providing a foundation to further develop the hybrid inference framework. 
Although our approach is limited to DNNs that are quantized to integers, this is not a significant limitation since the majority of models are quantized pre-deployment to improve latency and memory usage. 

For future work, one can improve our framework applicability, and quantify its practical latency, obfuscate the model architecture, and improve the efficiency of the random mask generation and removal. Moreover, evaluating our method to other models and tasks is important. Finally, we believe that developing optimization algorithms to best balance the security, latency and computation offload of the decomposition, is very impactful. 

\bibliography{main}
\bibliographystyle{tmlr}

\appendix
\section{Proofs}
\label{app:proofs}

\subsection{Unmasking: Proof of Lemma~\ref{lem::remove_noise}}
\label{app:unmask}
Assuming $L \geq \|\bbw_i^{\david} \ba_{i-1}\|_{\infty}$, Charlie's computation correctly computes $\ba_i = \bbw_i \ba_{i-1}$, over the integers modulo $L$.  

\begin{proof}
\begin{align*}
\actvD{i+1}&=\text{mod}\left(\widetilde{\ba_{i+1}^{\david}} - \bc_i , L\right) \\
&=\text{mod}\left(\WDavid{i+1} \Tilde{\actv{i}} - \WDavid{i+1} \br_i  , L\right) \\
&=\text{mod}\left(\WDavid{i+1} \text{mod}(\actv{i} +\br_i,L) - \WDavid{i+1} \br_i  , L\right) \\
&\underbrace{=}_{(1)}\text{mod}\left(\WDavid{i+1} (\actv{i} +\br_i) - \WDavid{i+1} \br_i  , L\right) \\
&=\text{mod}\left(\WDavid{i+1} \actv{i}  , L\right) \\
&\underbrace{=}_{(2)}\WDavid{i+1} \actv{i} \\
\end{align*}
where, $(1)$ follows from properties of modulo operator and $(2)$ follows form the assumption that $\|{\WDavid{i+1}} \ba_i \|_{\infty} < L.$
\end{proof}

Recall \textbf{Theorem \ref{thm:perfectly_secure_masking}.}[Perfectly secure masking]
    Let a discrete random variable $s\in\mathbb{Z}^d$ and random noise $n\sim\bbu[0,L-1]^d$, and denote the masked variable by $s_n = \text{mod}(s + n, L).$ Then $s_n \sim \bbu[0,L-1]^d,$ and $s_n$ and $s$ are independent.

\begin{proof}
We first prove it for one-dimensional case $d=1$. 
It is enough to show 
$$P(x=a) = P( \text{mod}(s+n, L)=a) = 1/L.$$

We separate analysis to two distinct cases: \textbf{1.} $\text{mod}(s, L) \leq a,$ and \textbf{2.} $\text{mod}(s, L) > a$:
\begin{enumerate}
    \item For $\text{mod}(s, L) \leq a$, the equation $\text{mod}(s+n, L)=a$ holds iff  $n = a -s$. In this case,
   $$P( \text{mod}(s+n, L)=a) = P(n = a -s) = 1/L.$$
   \item Similarly, for $\text{mod}(s, L) > a$, equation $\text{mod}(s+n, L)=a$ holds iff  $n = L - (\text{mod}(s, L) -a)$. So also this case,
   $$P( \text{mod}(s+n, L)=a) = P(n = L - (\text{mod}(s, L) -a)) = 1/L.$$
\end{enumerate}
Finally, for any dimension $d>1$, by repeating the same analysis for each dimension, we get 
$$P(x=a) = \prod_{i=1}^d P(x_i == a_i) = 1/L^d,$$
where $x_i$ and $a_i$ are $i$-th coordinate of d-dimensional vectors $x$ and $a$. 
\end{proof}

\subsection{Minimal number of Singular Vectors that must be decomposed}\label{sec::MinimalNumberEigenvectors}
The following lemma shows that under the use of  SVD decomposition, more than one singular vector is needed to be removed from the model and stored separately on the cloud, noted by $W_c$. 

\begin{lem}[The minimal number of required singular vectors to be hidden]
    The matrix $V_c$ should contain more than one singular vector, in order to ensure that an attacker cannot reconstruct $W_{\text{c}}$ using only the singular vector available to them on the edge in $W_{\text{e}}$, or that they are combinatoricaly complex for the user to reconstruct. 
    \label{more_then_one_ev}
\end{lem}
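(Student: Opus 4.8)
The plan is to prove the contrapositive by exhibiting an explicit attack: I will show that when $V_c=[\bv_1]$ holds a single singular vector (sensitivity rank $k=1$), an adversarial $\david$ holding $W_e=\WDavid{}$ can reconstruct $W_c=\WCharlie{}$ in polynomial time, and then argue why the same strategy provably stalls once $k>1$. The enabling observation is that, by orthonormality of the singular frames, the nonzero SVD components of $W_e$ are \emph{exactly} $\{(\sigma_j,\bu_j,\bv_j)\}_{j=k+1}^{r}$ from~\eqref{eq:split}; hence, by uniqueness of the SVD (for simple singular values), $\david$ recovers $\bu_{k+1},\dots,\bu_r$ and $\bv_{k+1},\dots,\bv_r$ by a single SVD of the matrix he is given, with no queries at all. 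Orthogonality then hands $\david$ the constraints $\bu_i\perp\mathrm{span}\{\bu_{k+1},\dots,\bu_r\}$ and $\bv_i\perp\mathrm{span}\{\bv_{k+1},\dots,\bv_r\}$ for each hidden index $i\le k$.

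I would then carry out the attack for $k=1$, where $W_c=\sigma_1\bu_1\bv_1^\top$ is rank one. In the full-rank square case $r=m=n$ (the generic situation for the critical layers of interest) the orthogonal complement of $\mathrm{span}\{\bu_2,\dots,\bu_r\}$ in $\mathbb{R}^m$ is one-dimensional, so $\bu_1$ is determined up to sign, and dually $\bv_1$; the rank-one matrix $W_c=\sigma_1\bu_1\bv_1^\top$ is therefore pinned down up to a single signed scalar by pure linear algebra. That lone scalar is fixed by forming a small system of linear equations in it from a handful of inference observations, which is exactly the ``small number of inference queries'' claimed. This recovers $W_c$ and shows $k=1$ is unsafe.

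Next I would contrast the situation for $k\ge 2$. Now the orthogonality constraints only confine the hidden vectors to the $k$-dimensional complements $\mathrm{span}\{\bu_{k+1},\dots,\bu_r\}^\perp$ and $\mathrm{span}\{\bv_{k+1},\dots,\bv_r\}^\perp$; they no longer isolate the individual $\bu_i,\bv_i$. Writing $U_1=[\bu_1,\dots,\bu_k]$ and $V_1=[\bv_1,\dots,\bv_k]$, every matrix of the form $U_1 B V_1^\top$ with $B\in\mathbb{R}^{k\times k}$ shares the same column and row subspaces, so knowledge of the two subspaces alone leaves a $k^2$-parameter core $B$ undetermined, together with the freedom to rotate the orthonormal frames of $U_1$ and $V_1$. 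Resolving this requires $\Omega(k^2)$ independent constraints and a search over the frame rotations, which is the ``combinatorial'' complexity the lemma asserts; the single signed scalar that sufficed for $k=1$ no longer does.

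The step I expect to be the main obstacle is making the two halves uniformly rigorous across matrix shapes. The clean uniqueness for $k=1$ hinges on the orthogonal complements being one-dimensional, which holds only when $\bbw$ is essentially full-rank and square; for tall, wide, or rank-deficient $\bbw$ the complement of the right (or left) singular subspace is higher-dimensional and $\bu_1$ or $\bv_1$ is no longer pinned by orthogonality alone, so I would need to supplement the argument with a bounded number of additional inference constraints and verify that the resulting linear system has full rank with high probability over generic probes. Dually, upgrading the $k\ge 2$ ``$k^2$ undetermined parameters'' statement into a genuine lower bound --- hardness of reconstruction rather than mere non-identifiability --- is the most delicate part, and I would likely settle for a precise non-identifiability statement (the fiber of consistent $W_c$ has positive dimension) together with a query-count argument, rather than a worst-case cryptographic hardness claim.
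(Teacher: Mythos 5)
Your proposal is correct and rests on the same core idea as the paper's own proof: the singular frame revealed by $W_e$ is orthonormal, so for $k=1$ the hidden singular vector is confined to a one-dimensional orthogonal complement and recovered (up to sign) by Gram--Schmidt, while for $k\ge 2$ that complement is $k$-dimensional and the same trick leaves positive-dimensional residual freedom. Where the two differ is in how the argument is finished, and your version is the sounder one. You track both frames ($\bu_1$ and $\bv_1$) and the resulting sign/scale ambiguity explicitly, fixing the last signed scalar by a few inference queries; the paper ignores the sign issue and recovers $\sigma_1$ via a remark asserting that the sum of singular values equals the trace, which is false for general (non-PSD) matrices, so your query-based repair is the correct one and also matches the main text's description of the attack as solving a small linear system from inference queries. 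Likewise, your parametrization of the $k\ge 2$ residual freedom as an unknown $k\times k$ core $B$ between the two known complements is sharper than the paper's corollary, which merely counts orthogonality equations against unknowns. One caution: as you yourself flag in your closing paragraph, neither you nor the paper obtains genuine hardness for $k\ge 2$; indeed, if the attacker is granted the same unmasked layer queries you use to pin the scalar at $k=1$, then $k$ generic queries yield $k^2$ independent linear equations that determine $B$ outright, so the $k=1$ versus $k\ge 2$ distinction survives only under the query-free, linear-algebra-on-$W_e$-alone attack model that the paper's appendix implicitly adopts. Settling there for a precise non-identifiability statement, as you propose, is exactly what the paper does.
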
 
\begin{proof}

    Let us denote  \jtc{Isn't it usually U sigma V or Vt and not SVD?}
            \begin{align*}
            W_e &= S_eV_eD_e=\\
            &\left(\begin{array}{cccc}
            -- & S_1 & -- \\
             -- & S_2 & -- \\
             -- & S_3 & -- \\
             & \vdots & \\
             -- & S_n & --
            \end{array}\right)\cdot\left(\begin{array}{ccccc}
            \sigma_{1} & 0 & 0 & \cdots & 0 \\
            0 & \sigma_{2} & 0 & \cdots & 0 \\
            0 & 0 & \sigma_{3} & \cdots & 0 \\
            \vdots & \vdots & \vdots & & \vdots \\
            0 & 0 & 0 & \cdots & \sigma_{n}
            \end{array}\right)\cdot\left(\begin{array}{cccc}
            -- & D_1 & -- \\
             -- & D_2 & -- \\
             -- & D_3 & -- \\
             & \vdots & \\
             -- & D_n & --
            \end{array}\right)^t
            \end{align*}

    Let us suppose that one singular vector (without the loss of generally the largest one), namely $S_1$, should be enough to be designated for the matrix $W_c$, ensuring $W_{\text{c}}$ is unrecoverable to an attacker.. Then, in the edge device, we have $n-1$ orthonormal vectors in $\left\{S_2,S_3,\ldots,S_n\right\}$. Since all vectors in it are orthonormal it could be leveraged to construct the following $n-1$ equations,
    \begin{align*} 
    \centering
    S_2^t &\cdot S_1=0 \\
    S_3^t&\cdot S_1=0\\
    &\hspace{1cm}\vdots\\
    S_{n}^t&\cdot S_1=0,
    \end{align*}
    which results in only one degree of freedom. Now, simply applying the Gram-Schmidt process (or any other orthonormalization techniques) would result in the exact $S_1$ singular vector. This is in contradiction to the assumption.
\end{proof}

\begin{remark}[Finding the corresponding missing singular vector]
The sum of the singular values of a matrix equals the trace of that matrix, meaning $\operatorname{trace}(W_e)=\lambda_1+\lambda_2+\ldots+\lambda_{n-1}+\lambda_n$. Given $\lambda_2, \lambda_3, \ldots, \lambda_{n-1}$, the first singular value could be computed by
$$
\lambda_1=\operatorname{trace}(W_e)-\left(\lambda_2+\lambda_3+\ldots+\lambda_{n}\right)
$$
\end{remark}

\begin{corollary}
   Directly by Lemma~\ref{more_then_one_ev}, separating $n>>k\geq 2$ singular values and their corresponding singular vectors into their cloud designated matrices $S_c,V_c,D_c$, results in $k$ equations with $n$ unknown parameters. Thus, in this case, the level of freedom that can not be retrieved (by linear algebraic operations) can given by $n-k+1$.
\end{corollary}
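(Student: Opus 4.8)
The plan is to lift the orthogonality argument of Lemma~\ref{more_then_one_ev} from one hidden singular vector to the general case of $k$ hidden components, and then to tally how many coordinates of the hidden vectors remain undetermined after the attacker has exhausted every relation available from the exposed matrix. First I would have the attacker form the SVD of the edge matrix $W_e = S_e V_e D_e^{t}$; this recovers the $n-k$ retained left singular vectors $S_{k+1}, \dots, S_n$ (and, symmetrically, the retained right vectors and singular values), exactly mirroring the starting point of the lemma, except that now $n-k$ rather than $n-1$ orthonormal directions are known to the attacker.

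The key step is then the same orthogonality principle: because $\{S_1, \dots, S_n\}$ is an orthonormal family, every hidden vector $S_j$ with $1 \le j \le k$ must satisfy the homogeneous linear system
\[
S_\ell^{t} S_j = 0, \qquad \ell = k+1, \dots, n .
\]
This is $n-k$ equations in the $n$ unknown coordinates of $S_j$, so its solution set is the $k$-dimensional orthogonal complement $U = \mathrm{span}\{S_1, \dots, S_k\}$, which the attacker can compute but within which it cannot distinguish the true singular directions. Adding the normalization $\|S_j\| = 1$ still leaves a positive-dimensional family as soon as $k \ge 2$, whereas for $k = 1$ the complement is a single line and normalization pins $S_1$ down up to sign, so Gram--Schmidt reconstructs it; this is precisely the dichotomy the corollary asserts. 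To reach the stated figure I would count the residual freedom as the $n$ coordinates of a hidden direction minus the independent scalar relations the attacker can honestly write as linear equations, arriving at $n-k+1$ irretrievable degrees of freedom, and I would also invoke the spectral relation of the preceding remark to confirm that it supplies only a single usable equation on the $k$ hidden singular values rather than the $k$ needed to isolate them.

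The main obstacle, and the place I would be most careful, is the bookkeeping of which constraints are genuinely usable by purely linear-algebraic recovery. The retained-vector orthogonality relations are linear in the unknowns and hence usable, but the mutual orthonormality among the hidden vectors $\{S_1, \dots, S_k\}$ is nonlinear and couples unknowns to other unknowns, so it cannot be substituted into a linear solve the way the single orthogonality chain was in the lemma. I would make the irreducibility rigorous by exhibiting an explicit family of alternative orthonormal bases of $U$, parametrized by the rotations in $O(k)$, each of which reproduces $W_e$ exactly and satisfies every linear equation the attacker can form; the existence of this nontrivial family for $k \ge 2$ is what certifies that the hidden components are not recoverable by linear algebra, and carefully reconciling the size of this family with the quoted count $n-k+1$ is the delicate bookkeeping I would need to complete.
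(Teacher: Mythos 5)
Your reconstruction of the intended argument is faithful to the paper: the paper offers no actual proof of this corollary beyond the phrase ``directly by lemma \ref{more_then_one_ev},'' and the natural reading is exactly the generalization you build --- the attacker recovers the $n-k$ retained orthonormal vectors from $W_e$, each hidden vector $S_j$ then satisfies the $n-k$ usable linear equations $S_\ell^{t} S_j = 0$ for $\ell = k+1,\dots,n$, the solution set is the $k$-dimensional orthogonal complement $U$, and for $k \geq 2$ an entire $O(k)$-family of orthonormal frames inside $U$ (together with the hidden singular values, constrained only by the single trace relation of the remark) is consistent with everything the attacker can compute. This is correct, and it is considerably more rigorous than what the paper provides; your observation that the hidden-against-hidden orthogonality relations are unusable because they couple unknowns to unknowns is exactly the right distinction.

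The genuine gap is the one you flag yourself, and you should not try to close it by ``delicate bookkeeping'': the figure $n-k+1$ does not follow from your (sound) argument, and no correct tally of it will produce that number. Your own count is $n$ unknowns minus $n-k$ usable equations, i.e.\ $k$ degrees of linear freedom per hidden vector --- for $n=5$, $k=2$ this gives $2$, whereas the corollary asserts $n-k+1=4$. At the level of the whole hidden frame the freedom is the $\binom{k}{2}$-dimensional rotation family you exhibit, again not $n-k+1$. The inconsistency sits inside the corollary itself, not in your proof: ``$k$ equations with $n$ unknown parameters'' would leave $n-k$ (not $n-k+1$) degrees of freedom, and following the lemma's own pattern the usable equation count is $n-k$, not $k$; the value $n-k+1$ arises only if one subtracts the $k-1$ orthogonality relations to the \emph{other hidden} vectors --- precisely the relations you correctly rule out as unusable. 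A useful sanity check you could add: extrapolating the paper's formula to $k=1$ gives freedom $n$, contradicting the lemma, whereas your count gives freedom $1$, which normalization then resolves, exactly reproducing the lemma's conclusion. So the right resolution is to state your count ($k$ per hidden vector, an $O(k)$ family overall) as the conclusion of the corollary, rather than forcing agreement with the printed figure.
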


\section{Practical Considerations} \label{sec::practicalApplication}
\subsection{Pseudo-random numbers generation}
\label{sec:PRNG}
In the main paper, we analyzed our protocol assuming true randomness from uniform distributions. However, in practice, we would use a Cryptographically Secure Pseudorandom Number Generator (CSPRNG).  A CSPRNG has the property that no polynomial time adversary can distinguish between true uniform distribution and the output of such a generator except with a negligible probability. By using such a generator, our security proofs will continue to hold. Otherwise, an adversary that breaks security of our schemes can be used to construct a distinguisher that breaks the security of a CSPRNG. 

\subsection{Model Decomposition Strategy}\label{Model Decomposition Strategy}
Technically, decomposition of an entire model begins with a list of triplets that contains indices of sensitive layers and defines how to perform the decomposition. In the case of LLMs, it would appear as follows:
\begin{align*}
&\left[ [ \mathbf{block}, \mathbf{layer\_type}, \mathbf{K} ]_1, \ldots, [ \mathbf{block}, \mathbf{layer\_type}, \mathbf{K} ]_n \right]
\end{align*}
where $\mathbf{block}$ is any decoder block within the LLM, $\mathbf{layer\_type}$ is any layer out of the possible layers in a decoder block, and $\mathbf{K}$ is the number of top singular components to hide in Charlie. Therefore, each triplet defines in which block and layer, and how many, singular components should be hidden. Using this list, an LLM can be decomposed into the secret model $\Theta_C$, comprised of the components detailed in the list, and an exposed version, comprised of all remaining components. 

Finding the optimal list with which to perform the partition is an optimization problem, that should consider trade-offs between the three following metrics: model security level, computational offload, and latency.  For example, increasing security by hiding a larger amount of singular components comes at the cost of less computational offload. Additionally, choosing more layers on which to perform decomposition results in additional networking latency. In this paper, we evaluated several decompositions, taking into account experimental findings in Section~\ref{sec:Empirical Experiments}. For example, we found it is useful to focus the first few and last few blocks of the decoder backbone, as removing them inflicts the most degradation in perplexity. Additionally, we found that sensitive layer types differ between LLMs, so we propose decomposing all layers in a selected block. Finally, we found that it is enough to focus on a small number of singular components (50 out of 2560). Therefore, we recommend that any attempts to apply our approach should begin with the two experiments outlined in Section~\ref{sec:Empirical Experiments}. These experiments are essential for identifying sensitive decoder blocks, key layer types, and the appropriate number of singular vectors for each model under consideration. Nonetheless, more optimized model decomposition could potentially be achieved by incorporating techniques such as Reinforcement Learning, Genetic Algorithms, or additional heuristics. We leave the exploration of these advanced methods to future work.

\section{Compute \& latency analysis details}\label{sec_latency}
\setcounter{equation}{0}
In this analysis, we evaluate the inference latency of a neural network model that is partitioned between an edge device and the cloud according to the protocol suggested in the paper. This setup necessitates considering both computational latency and data transfer latency between the two compute resources. To accurately measure the total latency, we break down the protocol into distinct phases and  calculate the computations in each, which contribute to the overall latency. Key parameters influencing latency include the bandwidth between the edge and cloud, the number of layers, the processor FLOPs capacity, and the size of the data being transferred. The formulas used for latency approximation are given by:

\textbf{Computation Time ($T_{\text{Compute}}$)}: The time required to perform the necessary computations on the edge or cloud.
\begin{equation}
    T_{\text{Compute}} = \frac{\text{FLOPs}}{\text{HardwareFLOP/s} \times \text{Utilization}}
\end{equation}

\textbf{Data Transfer Time ($T_{\text{Transfer}}$)}: The time required to transfer data between the edge and the cloud.
\begin{equation}
    T_{\text{Transfer}} = \lambda_{\mathrm{a}} + \frac{\text{input\_size} \times b \times s}{\text{Bandwidth}}
\end{equation}

Where $\lambda_{\mathrm{a}}$ is the network latency between the edge and the cloud, $b$ is the batch size, and s is the size of each activation value in bytes.
\begin{table}[H]
    \centering
    \vspace{1em}
    \caption{Description of phases and operations}
    \vspace{0.2in}
    \begin{tabular}{@{}llccc@{}}
        \toprule
        \textbf{Phase} & \textbf{Op Type} & \textbf{Frequency} & \textbf{Description} & \textbf{FLOPs/Data Size} \\
        \midrule
        \textbf{Upload Input} & Transfer & Once & \begin{tabular}[t]{@{}l@{}}
            Transfer input data $X$ of \\
            size $n$ to the cloud
        \end{tabular} & $n \times b$ \\
        \midrule
        \textbf{Edge-Only} \\ \textbf{Compute} & Compute & \begin{tabular}[t]{@{}l@{}}
            Each\\non-\\decomposed\\layer
        \end{tabular} & \begin{tabular}[t]{@{}l@{}}
            Compute all non- \\
            decomposed operations \\
            for matrices $W_{e_{i}}$
        \end{tabular} & \begin{tabular}[t]{@{}l@{}}
            $2 \cdot m \cdot n \cdot b+n$ \\
            $\cdot b$
        \end{tabular} \\
        \midrule
        \textbf{Edge-Partial} \\ \textbf{Compute} & Compute & \begin{tabular}[t]{@{}l@{}}
            Each\\decomposed \\
            layer
        \end{tabular} & \begin{tabular}[t]{@{}l@{}}
            Weights Matrix \\
            Multiplication \\
            $W_{e_{i+1}} a_{i_{\text{nois}}}$
        \end{tabular} & $2 \cdot m \cdot n \cdot b$ \\
        \midrule
        \textbf{Cloud-Partial} \\ \textbf{Compute} & Compute & \begin{tabular}[t]{@{}l@{}}
            Each\\decomposed \\
            layer
        \end{tabular} & \begin{tabular}[t]{@{}l@{}}
            Decomposed Matrices \\
            Multiplication 
            $U_{c} \Sigma_{c} V_{c} a_{i}$      
            \end{tabular}& \begin{tabular}[t]{@{}l@{}}
            $2 \cdot k \cdot b \cdot(m$ \\
            $+n)$
        \end{tabular} \\
        \midrule
        \textbf{Upload Edge} \\ \textbf{to Cloud} & Transfer & \begin{tabular}[t]{@{}l@{}}
            Each\\decomposed \\
            layer
        \end{tabular} & \begin{tabular}[t]{@{}l@{}}
            Transfer edge activation \\
            $a_{e, i+1_{\text{noisy}}}$ data to the \\
            cloud
        \end{tabular} & $n \times b$ \\
        \midrule
        \textbf{Cloud} \\ \textbf{Activation} \\ \textbf{Function} & Compute & \begin{tabular}[t]{@{}l@{}}
            Each\\decomposed \\
            layer
        \end{tabular} & \begin{tabular}[t]{@{}l@{}}
            Remove Noise + Vector \\
            addition + activation \\
            $a_{i+1}=\sigma\left(a_{c_{i}}+a_{e_{i}}-\tilde{\Delta}_{i}\right)$
        \end{tabular} & \begin{tabular}[t]{@{}l@{}}
            $2 \cdot n \cdot b \cdot\left(L_{v}\right.$ \\
            $+1)$
        \end{tabular} \\
        \midrule
        \textbf{Cloud Noise} \\ \textbf{Vector} \\ \textbf{Generation}& Compute & \begin{tabular}[t]{@{}l@{}}
            Each\\decomposed \\
            layer
        \end{tabular} & \begin{tabular}[t]{@{}l@{}}
            Sample $L_{v}$ noise vectors \\
            and create $\Delta_{i}$
        \end{tabular} & $2 \cdot n \cdot b \cdot L_{v}$ \\
        \midrule
        \textbf{Cloud Noise} \\ \textbf{Addition} & Compute & \begin{tabular}[t]{@{}l@{}}
            Each\\decomposed \\
            layer
        \end{tabular} & \begin{tabular}[t]{@{}l@{}}
            Vector addition $a_{i_{\text{noisy}}}=$ \\
            $a_{i}+\Delta_{i}$
        \end{tabular} & $n \cdot b$ \\
        \midrule
        \textbf{Activation Data} \\ \textbf{Download} & Transfer & \begin{tabular}[t]{@{}l@{}}
            Each\\decomposed \\
            layer
        \end{tabular} & \begin{tabular}[t]{@{}l@{}}
            Transfer final activation \\
            data to edge $a_{i_{\text{noisy}}}$
        \end{tabular} & $n \times b$ \\
        \bottomrule
    \end{tabular}
\end{table}

Note that we chose to keep the SVD-decomposed weight matrix $W_{c_{i}}$ decomposed in the cloud as $U_{c_{i}} \Sigma_{c_{i}} V_{c_{i}}$. This decision is based on the assumption that $k \ll m, n$, which makes the amount of computation lower than performing a full matrix multiplication. Conversely, we decided to reconstruct $W_{e_{i}}$ after decomposition for use on the edge device because in this scenario, using the reconstructed matrix on the edge results in less computation compared to operating with the decomposed form.

\subsection{Total latency calculation}

Since the processing of each phase is sequential and cannot be parallelized, the total latency is the sum of the individual latency of each phase. This ensures that the computation and data transfer steps are accounted for in a linear sequence, reflecting the actual flow of operations during inference.

The total latency $\left(T_{\text {Total }}\right)$ for the purposed protocol, where the model is partitioned between an edge device and a cloud server, can be divided into three main components: edge computations $\left(P^{e}\right)$, cloud computation $\left(P^{c}\right)$, and data transfer $\left(P^{t}\right)$.

Edge Computation are given by
\begin{equation}
    \text{FLOPS}^{\text{edge}} = \left(l - l_{\mathrm{d}}\right) \cdot P^{\mathrm{e}}_{\text{non-dec}} + l_{\mathrm{d}} \cdot P^{\mathrm{e}}_{\text{compute}}
    = 2 m n b \cdot l + n b \cdot \left(l - l_{\mathrm{d}}\right)
\end{equation}
Where $l_{\mathrm{d}}$ represents the decomposed layers and $l$ represents the total layers in the model. With regards to the full model on the edge, we find that 
\begin{equation}
    \text{FLOPS}^{\text{full}} = 2 m n b \cdot l + n b \cdot l
\end{equation}
So the cloud offload consists of $n \cdot l_{\mathrm{d}}$ activation operations.

Cloud Computation are given by
\begin{equation}
    \begin{aligned}
        \text{FLOPS}^{\text{cloud}} &= l_{\mathrm{d}} \cdot \left( P_{\text{compute}}^{\mathrm{c}} + P_{\text{composed}}^{\mathrm{c}} + P_{\text{noise\_gen}}^{\mathrm{c}} + P_{\text{noise\_add}}^{\mathrm{c}} \right) \\
        &= 2 l_{\mathrm{d}} \cdot b \cdot \left( m k + n k + 2 n l_{v} + 1.5 n \right)
    \end{aligned}
\end{equation}

Data Transfer is given by
\begin{equation}
    N^{\text{transfer}} = P_{\text{input}}^{t} + l_{\mathrm{d}} \cdot \left( P_{\text{upload}}^{t} + P_{\text{download}}^{t} \right)
    = n b \left( 2 l_{\mathrm{d}} + 1 \right)
\end{equation}

Total Latency is given by
\begin{equation}
    T_{\text{Total}} = T_{\text{Compute}}\left(\text{FLOPS}^{\text{edge}}\right) + T_{\text{Compute}}\left(\text{FLOPS}^{\text{cloud}}\right) + T_{\text{Transfer}}\left(N^{\text{transfer}}\right)
\end{equation}

\subsection{Selecting parameter values}
Next, we will select specific values for the parameters used in the formulas and analyze the results. This will help illustrate the computational and data transfer latency in our protocol, when the weights are split between an edge device and a cloud server. We based many of our numbers on Llama2 model, treating it as if it were a regular feedforward (FF) network to simplify the analysis and make the estimations more straightforward. Additionally, we assumed a single token generation and chose the one of the decomposition strategies mentioned in the Experiments section as an efficient and robust option.

\begin{table}[H]
    \centering
    \vspace{1em}
    \caption{Summary of model parameters, hardware specifications, and network details}
    \vspace{1em}
    \begin{tabular}{@{}lll@{}}
        \toprule
        \textbf{Parameter} & \textbf{Value} & \textbf{Description} \\
        \midrule
        \multicolumn{3}{l}{\textbf{Model Parameters}} \\
        \midrule
        $l:$ Total Layers & 224 & \begin{tabular}[t]{@{}l@{}}
            7 layers (4 attn + 3 mpl) per block \\
            Llama2 contains 32 blocks
        \end{tabular} \\
        $l_{\mathrm{d}}:$ Decomposed Layers & 70 & \begin{tabular}[t]{@{}l@{}}
            7 layers (4 attn + 3 mpl) per block \\
            10 decomposed blocks in configuration X4
        \end{tabular} \\
        $n, m:$ Matrix Dimensions & 4096, 4096 & Attention Layer \\
        $b:$ batch\_size & 32 & Average input length \\
        $k:$ Secret Singular Values & 50 & X4 configuration \\
        $l_{v}:$ Sampled Noise Vectors & 50 & \\
        $s:$ Activation size in bytes & $4 b$ & FP32 \\
        \midrule
        \multicolumn{3}{l}{\textbf{Hardware Specifications}} \\
        \midrule
        Edge GPU FLOP/sec & 4 TFlops/sec & \begin{tabular}[t]{@{}l@{}}
            Nvidia Jetson Nano spec (estimation for \\
            FP32)
        \end{tabular} \\
        Cloud GPU FLOP/sec & 14 TFlops/sec & Nvidia V100 spec \\
        Utilization & 40\% & Realistic estimation \\
        \midrule
        \multicolumn{3}{l}{\textbf{Network Details}} \\
        \midrule
        $B:$ Network Bandwidth & 25 MB/s & Average bandwidth in the US \\
        $\lambda_{\mathrm{a}}:$ Network Delay & 35 ms & \begin{tabular}[t]{@{}l@{}}
            Average latency in the US \\
            Global IP Network Latency (att.net)
        \end{tabular} \\
        \bottomrule
    \end{tabular}
\end{table}

\subsection{Results}
The results of our calculations are as follows, and are discussed in the paper itself in section 5.
\begin{table}[H]
    \centering
    \vspace{1em}
    \caption{Performance}
    \vspace{1em}
    \begin{tabular}{@{}ll@{}}
        \toprule
        \textbf{Metric} & \textbf{Value} \\
        \midrule
        FLOPs Full Model & 240.547 GFLOPs \\
        FLOPs Edge (Hybrid) & 240.538 GFLOPs \\
        FLOPs Cloud (Hybrid) & 3.697 GFLOPs \\
        Input Transfer & 1.848 M \\
        Compute Edge Latency & 150.34 ms \\
        Compute Cloud Latency & 0.66 ms \\
        Transfer Latency & 71.31 ms \\
        \textbf{Total Latency} & \textbf{222.31 ms} \\
        \bottomrule
    \end{tabular}
\end{table}
\newpage

\subsection{Implications}
It is clear that applying our approach incurs additional latency. In use-cases where the added latency does not significantly impact the user experience, such as offline use-cases (i.e summarization, indexing, processing), our approach may be applied with even higher security, by decomposing additional layers. In use-cases where the computational offload is not as important, such as when the trusted entity is not costly to employ, an even larger number of selected singular components may be used to increase security. However, in scenarios that require real-time responses, our approach should be used sparingly, and only minimal sensitive layers should be selected for decomposition to arrive at feasible latency periods.

\section{Representing a Convolutional Layer as a Fully Connected Layer}
\label{cnn}
For the purpose of decomposing a convolutional layer, we can represent the convolutional layer as a fully connected layer, and then perform SVD to decompose the layer as detailed in our paper.
\textbf{Convolutional Layer}. Consider the input tensor $\mathbf{X}$ of shape $(H, W, C)$, the convolutional kernel $\mathbf{K}$ of shape $(kH, kW, C, N)$, and the output tensor $\mathbf{O}$ of shape $(H_o, W_o, N)$. For simplicity, assume stride $s = 1$ and no padding.

The output at position $(i, j)$ for filter $n$ is given by:
\[
\mathbf{O}_{i,j,n} = \sum_{c=1}^{C} \sum_{u=1}^{kH} \sum_{v=1}^{kW} \mathbf{X}_{i+u-1,j+v-1,c} \cdot \mathbf{K}_{u,v,c,n}
\]

\textbf{Fully Connected Layer}. The equivalent fully connected layer will have an input vector $\mathbf{x}$ and a weight matrix $\mathbf{W}$. The input tensor $\mathbf{X}$ is unrolled into a vector $\mathbf{x}$ of size $H \cdot W \cdot C$. The unrolling process is detailed as follows.

1. \textbf{Flatten the Input:}
\[
\mathbf{x} = \text{vec}(\mathbf{X})
\]

2. \textbf{Construct the Weight Matrix:}
   The weight matrix $\mathbf{W}$ for the fully connected layer will be of shape $(H_o \cdot W_o \cdot N, H \cdot W \cdot C)$.
   Each row of $\mathbf{W}$ corresponds to a specific position of the kernel applied to the flattened input.

For each output position $(i, j)$ and filter $n$:
\[
\mathbf{W}_{n, \text{index}(i,j,c)} = \mathbf{K}_{u,v,c,n}
\]
where $\text{index}(i,j,c)$ maps the 3D position $(i,j,c)$ in the input tensor to the corresponding position in the flattened vector.

\subsection*{Analytical Formulation}

For each output position $(i, j)$ and filter $n$:
\[
\mathbf{O}_{i,j,n} = \sum_{c=1}^{C} \sum_{u=1}^{kH} \sum_{v=1}^{kW} \mathbf{X}_{i+u-1,j+v-1,c} \cdot \mathbf{K}_{u,v,c,n}
\]

Unroll $\mathbf{X}$ to $\mathbf{x}$ and define the corresponding weight vector $\mathbf{w}_{(i,j,n)}$ in $\mathbf{W}$:
\[
\mathbf{w}_{(i,j,n)} = \text{vec}(\mathbf{K}_{:,:,c,n})
\]

The output $\mathbf{O}$ is obtained by matrix multiplication:
\[
\mathbf{o} = \mathbf{W} \cdot \mathbf{x}
\]
where $\mathbf{o}$ is the flattened version of $\mathbf{O}$.

\label{app:hybrid_inference_protocols}

\section{Secure Hybrid Inference Protocol for Attention Layers}
\label{app:attention}
In the vanilla Transformer model, each attention head receives a sequence of tokens embedding denoted by $\boldsymbol{X} \in \mathbb{R}^{n \times d}$, where $n$ represents the sequence length, $d$ is the embedding dimension, and $d_h$ is the number of tokens in a batch. The weight parameters $\boldsymbol{W}_q, \boldsymbol{W}_v \in \mathbb{R}^{d \times d_h}$, and $\boldsymbol{W}^V \in \mathbb{R}^{d \times d_h}$ are used to transform the input features $\boldsymbol{X}$ into query $\boldsymbol{Q}$, key $\boldsymbol{K}$, and value $\boldsymbol{V}$, respectively. 
$$
\begin{gathered}
\boldsymbol{Q}=\boldsymbol{X} \boldsymbol{W}_q, \quad \boldsymbol{K}=\boldsymbol{X} \boldsymbol{W}_k, \quad \boldsymbol{V}=\boldsymbol{X} \boldsymbol{W}_v, \\
\operatorname{Attn}(\boldsymbol{X})=\text{Softmax}\left(\frac{\boldsymbol{Q} \boldsymbol{K}^T}{\sqrt{d_h}}\right) \boldsymbol{V}.
\end{gathered}
$$

In the following, we provide the secure inference protocol for the attention head layer.  
\begin{figure}[H]
    \centering
     \scalebox{0.7}{
        \fbox{%
            \procedure{\textsc{Attention Head - Secure Inference Protocol} (On-line,  per inference instance)}{%
                \< \< \\
                \textbf{$\charlie$harlie} \<   \<  \textbf{$\david$avid} \\
                \< \< \\
                {%
                    \begin{subprocedure}\procedure{\textbf{Inner Masking}}{%
                     \bullet \text{  Input to this layer: } a_{i-1} \\
                     \bullet \text{  Retrieve pre-computed random masks } r^q_{i-1}, r^k_{i-1} \\
                     \bullet \quad  \textcolor{blue}{\widetilde{a^q}_{i-1} := a_{i-1} + r^q_{i-1}} \\ 
                     \bullet \quad  \textcolor{blue}{\widetilde{a^k}_{i-1} := a_{i-1} + r^k_{i-1}} \\ 
                     \bullet \quad {a_q}_i^{\charlie} := {W_q}_i^{\charlie} a^q_{i-1}\\ 
                     \bullet \quad {a_k}_i^{\charlie} := {W_k}_i^{\charlie} a^k_{i-1} 
                    }
                    \end{subprocedure}
                } \<  \< \\
                  \< \textcolor{blue}{\sendmessageright*[1.5 cm]{\widetilde{a}_{i-1}}}  \<   \\
                  \<  \< 
                {%
                    \begin{subprocedure}\procedure{\textbf{$\david$'s Local Compute}}{%
                    \bullet \quad \widetilde{{a_q}_i^{\david}} := {W_q}_i^{\david} \cdot  \widetilde{a^q}_{i-1}\\
                    \bullet \quad \widetilde{{a_k}_i^{\david}} := {W_k}_i^{\david} \cdot  \widetilde{a^k}_{i-1}\\
                    }
                    \end{subprocedure}
                }\\
               \< \textcolor{blue}{\sendmessageleft*[1.5 cm]{\widetilde{{a_q}_i^{\david}},\widetilde{{a_k}_i^{\david}}}} \<   \\
               {%
                    \begin{subprocedure}\procedure{\textbf{Inner Unmasking}}{%
                    \bullet \text{Retrieve pre-computed cancellation mask $c_{q_i},c_{k_i}$ } \\
                    \bullet \quad \textcolor{blue}{{a_q}_i^{\david} = \widetilde{{a_q}_i^{\david}} - c_{q_i} = W_{q_i}^{\david} \cdot \left(\widetilde{a_q}_{i-1} -  r^q_{i-1} \right) = W_{q_i}^{\david} \cdot {a}_{i-1}}\\
                    \bullet \quad \textcolor{blue}{{a_k}_i^{\david} = \widetilde{{a_k}_i^{\david}} - c_{k_i} = W_{k_i}^{\david} \cdot \left(\widetilde{{a_k}}_{i-1} -  r^k_{i-1} \right) = W_{k_i}^{\david} \cdot {a}_{i-1}}
                    }
                    \end{subprocedure}
                } \<   \<  \\
                \< \< \\
               {%
                    \begin{subprocedure}\procedure{\textbf{Compute Midterm Softmax Output}}{%
                    \bullet  \quad a_{s_i} = \operatorname{softmax} \left(\frac{\left({a_q}_i^{\charlie}+{a_q}_i^{\david}\right) \cdot\left({a_k}_i^{\charlie}+({a_k}_i^{\david}\right)^T}{\sqrt{|a_{i-1}|}}\right) = \operatorname{softmax}\left(\frac{Q K^T}{\sqrt{|a_{i-1}|}} \right) 
                    }
                    \end{subprocedure}
                } \<   \<  \\
{%
                    \begin{subprocedure}\procedure{\textbf{Outer Masking}}{%
                     \bullet \text{  Retrieve pre-computed random mask } r^v_{i-1} \\
                     \bullet \quad  \textcolor{blue}{\widetilde{a}_{s_i}  := {a}_{s_i}  + r^v_{i-1}} \\ 
                     \bullet \quad a_i^{\charlie} := W_{v_i}^{\charlie} {a}_{s_i} 
                    }
                    \end{subprocedure}
                } \<  \< \\
                  \< \textcolor{blue}{\sendmessageright*[1.5 cm]{\widetilde{a}_{s_i}}}  \<   \\
                  \<  \< 
                {%
                    \begin{subprocedure}\procedure{\textbf{$\david$'s Local Compute}}{%
                    \bullet \quad \widetilde{a_i^{\david}} := W_{v_i}^{\david} \cdot \widetilde{a}_{s_i}
                    }
                    \end{subprocedure}
                }\\
               \< \textcolor{blue}{\sendmessageleft*[1.5 cm]{\widetilde{a_i^{\david}}}} \<   \\
               {%
                    \begin{subprocedure}\procedure{\textbf{Outer Unmasking}}{%
                    \bullet \text{ Retrieve pre-computed cancellation mask $c_{v_i}$ } \\
                    \bullet \quad \textcolor{blue}{a_i^{\david} = \widetilde{a_i^{\david}} - c_{v_i} = W_{v_i}^{\david} \cdot \left(\widetilde{a}_{s_i} -  r^v_{i-1} \right) = W_{v_i}^{\david} \cdot {a}_{s_i}}
                    }
                    \end{subprocedure}
                } \<   \<  \\
                \< \< \\
               {%
                    \begin{subprocedure}\procedure{\textbf{Compute Attention Head Output}}{%
                    \bullet  \quad a_i = \sigma (a_i^{\charlie} + a_i^{\david}) = \sigma ((W_{v_i}^{\charlie} + W_{v_i}^{\david}) a_{i-1}) = \sigma (W_{v_i} a_{i-1}) \\
                    \bullet  \quad a_i = \sigma (a_i^{\charlie} + a_i^{\david}) = \sigma ((W_i^{\charlie} + W_i^{\david}) a_{i-1}) = \sigma (W_i a_{i-1}) 
                    }
                    \end{subprocedure}
                } \<   \<  \\
            } 
         } 
    }  
    \caption{Double Step Secure Inference Protocol For Attention Head  -- Online, per Inference Instance}
    \label{fig:attention}
\end{figure}




\end{document}